\documentclass[a4paper]{article}
\usepackage[pdftex]{graphicx}
\usepackage{epstopdf}
\expandafter\let\csname equation*\endcsname\relax
\expandafter\let\csname endequation*\endcsname\relax
\usepackage{amsmath}
\usepackage{amsfonts}
\usepackage{amssymb}
\usepackage{makeidx}
\usepackage[margin=3cm]{geometry}

\newcommand{\beq}{\begin{equation}} \newcommand{\eeq}{\end{equation}}
\newcommand{\bea}{\begin{eqnarray}} \newcommand{\eea}{\end{eqnarray}}
\newcommand{\bear}{\begin{eqnarray*}} \newcommand{\eear}{\end{eqnarray*}}
\newcommand{\lb}{\label} 
\newcommand{\rf}[1]{(\ref{#1})}   

\newtheorem{theorem}{Theorem}

\newtheorem{definition}{Definition}
\newtheorem{remark}{Remark}

\newenvironment{proof}[1][Proof]{\begin{trivlist}
\item[\hskip \labelsep {\bfseries #1}]}{\end{trivlist}}

\begin{document}

\title {Fractional variational problems depending on fractional derivatives of differentiable functions with application to  nonlinear chaotic systems}

\author{Matheus Jatkoske Lazo\\
  Instituto de Matem\'atica, Estat\'istica e F\'isica,\\
  Universidade Federal do Rio Grande,\\
  Rio Grande - RS,\\
  Brazil.\\
  \texttt{matheuslazo@furg.br}}

\date{}

\maketitle

\begin{abstract}

\indent
In the present work, we formulate a necessary condition for functionals with Lagrangians depending on fractional derivatives of differentiable functions to possess an extremum. The Euler-Lagrange equation we obtained generalizes previously known results in the literature and enables us to construct simple Lagrangians for nonlinear systems.  As examples of application, we obtain Lagrangians for some chaotic dynamical systems.


\end{abstract}

{\bf Keywords:} {fractional calculus of variation, fractional Euler-Lagrange equation, nonlinear chaotic systems}

{\bf MSC:} {49K21, 26A33, 4C28}


\section{Introduction}

The calculus with fractional derivatives and integrals of non-integer order started more than three centuries ago when Leibniz proposed a derivative of order $\frac{1}{2}$ in response to a letter from l'H\^opital \cite{OldhamSpanier}. This subject was also considered by several mathematicians as Euler, Fourier, Liouville, Grunwald, Letnikov, Riemann and others up to nowadays. Although the fractional calculus is almost as old as the usual integer order calculus, only in the last three decades it has gained more attention due to its applications in various fields of science (see \cite{SATM,Kilbas,Diethelm,Hilfer,Magin,SKM} for a review). Fractional derivatives are generally non-local operators and are historically applied to study non-local or time dependent processes, as well as to model phenomena involving coarse-grained and fractal spaces. As an example, applications of fractional calculus in coarse-grained and fractal spaces are found in the framework of anomalous diffusion \cite{Metzler,Metzler2,Klages} and field theories \cite{Tarasov3,Lazo,BI,Calcagni1,Calcagni2,Vacaru}. 

The fractional calculus of variation was introduced in the context of classical mechanics. Riewe \cite{Riewe1,Riewe2} showed that a Lagrangian involving fractional time derivatives leads to an equation of motion with non-conservative forces such as friction. It is a remarkable result since frictional and non-conservative forces are beyond the usual macroscopic variational treatment \cite{Bauer}, and consequently, beyond the most advanced methods of classical mechanics. Riewe generalized the usual calculus of variations for a Lagrangian depending on fractional derivatives \cite{Riewe1,Riewe2} in order to deal with linear non-conservative forces. Recently, several approaches have been developed to generalize the least action principle and the Euler-Lagrange equations to include fractional derivatives \cite{Agrawal,BA,Cresson,APT,AT,OMT,MalinowskaTorres}. 

Despite the Riewe approach has been successfully applied to study open and/or non-conservative linear systems, it cannot be directly applied to nonlinear open systems. The limitation follows from the fact that, in order to obtain a final equation of motion containing only integer order derivatives, the Lagrangian should contain only quadratic terms depending on fractional derivatives. In the present work we formulated a generalization of Riewe fractional action principle by taking advantage of a so called practical limitation of fractional derivatives, namely, the absence of a simple chain and Leibniz's rules.

As examples, we applied our generalized fractional variational principle to some nonlinear chaotic third-order dynamical systems, so called jerk dynamical systems because the derivative of the acceleration with respect to time is referred to as the jerk \cite{jerk}. These systems are important because they are the simplest ever one-dimensional autonomous ordinary differential equations which display dynamical behaviors including chaotic solutions \cite{jerk1,jerk2a,jerk2b,jerk2c,jerk2d,jerk2e,jerk2f,jerk2g}. It is important to mention that jerk dynamical systems describe several phenomena in physics, engineering and biology, such as electrical circuits, mechanical oscillators, laser physics, biological systems, etc \cite{jerk1,jerk2a,jerk2b,jerk2c,jerk2d,jerk2e,jerk2f,jerk2g}.


\section{The Riemann-Liouville and Caputo Fractional Calculus}

The fractional calculus of derivative and integration of non-integers orders started more than three centuries ago with l'H\^opital and Leibniz when the derivative of order $\frac{1}{2}$ was suggested \cite{OldhamSpanier}. This subject as also considered by several mathematicians as Euler, Laplace, Liouville, Grunwald, Letnikov, Riemann and others up to nowadays. Although the fractional calculus is almost as old as the usual integer order calculus, only in the last three decades it has gained more attention due to its applications in various fields of science, engineering, economics, biomechanics, etc (see \cite{SATM,Kilbas,Hilfer,Magin} for a review). Actually, there are several definitions of fractional order derivatives. These definitions include the Riemann-Liouville, Caputo, Riesz, Weyl,  Grunwald-Letnikov, etc. (see \cite{OldhamSpanier,SATM,Kilbas,Diethelm,Hilfer,Magin,SKM} for a review). In this section we review some definitions and properties of the Riemann-Liouville and Caputo fractional calculus.

Despite we have many different approaches to fractional calculus, several known formulations are somehow connected with the analytical continuation of Cauchy formula for $n$-fold integration
\beq
\lb{a2}
\begin{split}
\int_{a}^t x(\tilde{t})(d\tilde{t})^{n} &= \int_{a}^t\int_{a}^{t_{n}}\int_{a}^{t_{n-1}}\cdots \int_{a}^{t_3}\int_{a}^{t_2} x(t_1)dt_1dt_2\cdots dt_{n-1}dt_{n} \\
&= \frac{1}{\Gamma(n)}\int_{a}^t \frac{x(u)}{(t-u)^{1-n}}du \;\;\;\;\; (n\in \mathbb{N}),
\end{split}
\eeq
where $\Gamma$ is the Euler gamma function. The proof of Cauchy formula can be found in several textbooks (for example, it can be found in \cite{OldhamSpanier}). The analytical continuation of \rf{a2} give us a definition for an integration of non-integer (or fractional) order. This fractional order integration is the building bloc of the Riemann-Liouville and Caputo calculus, the two most popular formulations of fractional calculus, as well as several other approaches to fractional calculus \cite{OldhamSpanier,SATM,Kilbas,Diethelm,Hilfer,Magin,SKM}. The fractional integration obtained from \rf{a2} is historically called Riemann-Liouville left and right fractional integrals: 
\begin{definition} Let $\alpha \in \mathbb{R}_+$. The operators ${_a J^{\alpha}_t}$ and ${_t J^{\alpha}_b}$ defined on $L_1[a,b]$ by
\beq
\lb{a3}
{_a J^{\alpha}_t} x(t) =\frac{1}{\Gamma(\alpha)}\int_{a}^t \frac{x(u)}{(t-u)^{1-\alpha}}du \;\;\;\;\; (\alpha \in \mathbb{R}_+)
\eeq
and
\beq
\lb{a4}
{_t J^{\alpha}_b} x(t) =\frac{1}{\Gamma(\alpha)}\int_t^b \frac{x(u)}{(u-t)^{1-\alpha}}du \;\;\;\;\; ( \alpha \in \mathbb{R}_+),
\eeq
with $a<b$ and $a,b\in \mathbb{R}$, are called left and the right fractional Riemann-Liouville integrals of order $\alpha \in \mathbb{R}$, respectively.
\end{definition}

For integer $\alpha$ the fractional Riemann-Liouville integrals \rf{a3} and \rf{a4} coincide with the usual integer order $n$-fold integration \rf{a2}. Moreover, from the definitions \rf{a3} and \rf{a4} it is easy to see that the Riemann-Liouville fractional integrals converge for any integrable function $x$ if $\alpha>1$. Furthermore, it is possible to proof the convergence of \rf{a3} and \rf{a4} for $x\in L_1[a,b]$ even when $0<\alpha<1$ \cite{Diethelm}.

The integration operators ${_a J^{\alpha}_t}$ and ${_t J^{\alpha}_b}$ play a fundamental role in the definition of fractional Riemann-Liouville and Caputo calculus. In order to define the Riemann-Liouville derivatives, we recall that for positive integers $n>m$ it follows the identity $D^m_t x(t)=D^{n}_t {_aJ^{n-m}_t x(t)}$, where $D^m_t$ is an ordinary derivative of integer order $m$. 
\begin{definition}[Riemann-Liouville]
The left and the right Riemann-Liouville fractional derivative of order $\alpha >0$ ($\alpha\in \mathbb{R}$) are defined, respectively, by ${_a D^{\alpha}_t} x(t) := D^{n}_t {_a J^{n-\alpha}_t} x(t)$ and ${_t D^{\alpha}_b} x(t):=(-1)^nD^{n}_t{_t J^{n-\alpha}_b} x(t)$ with $n=[\alpha]+1$, namely
\beq
\lb{a5}
{_a D^{\alpha}_t} x(t)=\frac{1}{\Gamma(n-\alpha)}\frac{d^n}{dt^n}\int_{a}^t \frac{x(u)}{(t-u)^{1+\alpha-n}}du 
\eeq
and
\beq
\lb{a6}
{_t D^{\alpha}_b} x(t)=\frac{(-1)^n}{\Gamma(n-\alpha)}\frac{d^n}{dt^n}\int_{t}^b \frac{x(u)}{(u-t)^{1+\alpha-n}}du,
\eeq
where $\frac{d^n}{dt^n}$ stands for ordinary derivatives of integer order $n$. 
\end{definition}
On the other hand, the Caputo fractional derivatives are defined by inverting the order between derivatives and integrations.
\begin{definition}[Caputo]
The left and the right Caputo fractional derivatives of order $\alpha\in \mathbb{R}_+$ are defined, respectively,
by ${_a^C D}^{\alpha}_t x(t) := {_aJ}^{n-\alpha}_t D^{n}_t x(t)$
and ${_t^C}D^{\alpha}_b x(t) := (-1)^n _tJ^{n-\alpha}_b
D^{n}_t x(t)$ with $n=[\alpha]+1$; that is,
\begin{equation}
\label{a7}
{_a^C D}^{\alpha}_t x(t) := \frac{1}{\Gamma(n-\alpha)}
\int_{a}^t \frac{x^{(n)}(u)}{(t-u)^{1+\alpha-n}}du
\end{equation}
and
\begin{equation}
\label{a8}
{_t^C D}^{\alpha}_b x(t)
:= \frac{(-1)^n}{\Gamma(n-\alpha)}\int_{t}^b
\frac{x^{(n)}(u)}{(u-t)^{1+\alpha-n}}du,
\end{equation}
where $a \le t \le b$ and $x^{(n)}(u)=\frac{d^n x(u)}{du^n} \in L_1([a,b])$
is the ordinary derivative of integer order $n$.
\end{definition}

An important consequence of definitions \eqref{a5}--\eqref{a8} is that the Riemann-Liouville and Caputo fractional derivatives are non-local operators. The left (right) differ-integration operator \eqref{a5} and \eqref{a7} (\eqref{a6} and \eqref{a8}) depends on the values of the function at left (right) of $t$, i.e. $a\leq u \leq t$ ($t\leq u \leq b$). On the other hand, it is important to note that when $\alpha$ is an integer, the Riemann-Liouville fractional derivatives \eqref{a5} and \eqref{a6} reduce to ordinary derivatives of order $\alpha$. On the other hand, in that case, the Caputo derivatives \eqref{a7} and \eqref{a8} differ from integer order ones by a polynomial of order $\alpha -1$ {\rm \cite{Kilbas,Diethelm}.

It is important to remark, for the purpose of this work, that the fractional derivatives \rf{a5}--\rf{a8} do not satisfy a simple generalization of the chain and Leibniz's rules of classical derivatives \cite{OldhamSpanier,SATM,Kilbas,Diethelm,Hilfer,Magin,SKM}. In other words, generally we have:
\begin{equation}
\label{a9}
{_a^C D}^{\alpha}_t \left[x(t)y(t)\right]\neq y(t){_a^C D}^{\alpha}_tx(t)+x(t){_a^C D}^{\alpha}_ty(t)
\end{equation}
and
\begin{equation}
\label{a10}
{_a^C D}^{\alpha}_t y(x(t)) \neq {_a^C D}^{\alpha}_u y(u)|_{u=x} \;{_a^C D}^{\alpha}_t x(t).
\end{equation}
The absence of a simple chain and Leibniz's rules is commonly considered a practical limitation of the fractional derivatives \rf{a5}--\rf{a8}. However, in the present work we take advantage of this limitation in order to formulate a generalized Lagrangians for nonlinear systems.

In addition to the definitions \rf{a5}--\rf{a8}, we make use of the following property in order to obtain a fractional generalization of the Euler-Lagrange condition.
\begin{theorem}[Integration by parts --- see, e.g., \cite{SKM}]
\label{thm:ml:03}
Let $0<\alpha<1$ and $x$ be a differentiable function in $[a,b]$ with $x(a)=x(b)=0$. For any function $y \in L_1([a,b])$ one has
\begin{equation}
\label{a15}
\int_{a}^{b} y(t) {_a^C D_t^{\alpha}} x(t)dt
= \int_a^b x(t) {_t D_b^{\alpha}} y(t)dt
\end{equation}
and
\begin{equation}
\label{a16}
\int_{a}^{b} y(t)  {_t^C D_b^\alpha} x(t)dt
=\int_a^b x(t) {_a D_t^\alpha} y(t) dt.
\end{equation}
\end{theorem}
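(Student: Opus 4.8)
The plan is to reduce both identities to two completely elementary ingredients: the classical integration-by-parts formula, together with a Fubini-type interchange for the fractional \emph{integrals}. I would start from the left identity \rf{a15}. Since $0<\alpha<1$ we have $n=[\alpha]+1=1$, so the Caputo derivative \rf{a7} is literally the Riemann--Liouville integral of the ordinary derivative, ${_a^C D_t^{\alpha}}x(t)={_aJ_t^{1-\alpha}}x'(t)$, and hence
\beq
\int_a^b y(t)\,{_a^C D_t^{\alpha}}x(t)\,dt=\frac{1}{\Gamma(1-\alpha)}\int_a^b y(t)\left(\int_a^t\frac{x'(u)}{(t-u)^{\alpha}}\,du\right)dt .
\eeq

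First I would interchange the two integrations over the triangle $\{a\le u\le t\le b\}$. Because $0<\alpha<1$, the kernel $s\mapsto s^{-\alpha}$ lies in $L_1[0,b-a]$, so $u\mapsto\int_u^b (t-u)^{-\alpha}|y(t)|\,dt$ belongs to $L_1[a,b]$ by Young's inequality; combined with (local) boundedness of $x'$ this makes the double integral absolutely convergent, and Fubini's theorem gives
\beq
\int_a^b y(t)\,{_a^C D_t^{\alpha}}x(t)\,dt=\int_a^b x'(u)\left(\frac{1}{\Gamma(1-\alpha)}\int_u^b\frac{y(t)}{(t-u)^{\alpha}}\,dt\right)du=\int_a^b x'(u)\,{_uJ_b^{1-\alpha}}y(u)\,du ,
\eeq
the inner bracket being exactly the right Riemann--Liouville integral \rf{a4} of order $1-\alpha$. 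Equivalently, this is the elementary reciprocity $\int_a^b f(t)\,{_aJ_t^{\beta}}g(t)\,dt=\int_a^b g(t)\,{_tJ_b^{\beta}}f(t)\,dt$ applied with $f=y$, $g=x'$, $\beta=1-\alpha$.

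Then I would integrate by parts in the ordinary sense in the variable $u$: writing $g(u):={_uJ_b^{1-\alpha}}y(u)$ we get $\int_a^b x'(u)\,g(u)\,du=\big[x(u)g(u)\big]_a^b-\int_a^b x(u)\,g'(u)\,du$, and the boundary term vanishes by $x(a)=x(b)=0$. Finally, with $n=1$ in \rf{a6} one has ${_uD_b^{\alpha}}y(u)=-\dfrac{d}{du}{_uJ_b^{1-\alpha}}y(u)=-g'(u)$, so the surviving term is $\int_a^b x(u)\,{_uD_b^{\alpha}}y(u)\,du$, which is \rf{a15} after renaming $u\to t$. The right identity \rf{a16} would be obtained by repeating the same three steps with ``left'' and ``right'' exchanged, or more economically by applying \rf{a15} to $\tilde x(t):=x(a+b-t)$ and $\tilde y(t):=y(a+b-t)$ and using that the reflection $t\mapsto a+b-t$ interchanges ${_a^CD_t^{\alpha}}$ with ${_t^CD_b^{\alpha}}$ and ${_aD_t^{\alpha}}$ with ${_tD_b^{\alpha}}$.

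The step I expect to be the main obstacle is not any of the manipulations but the regularity bookkeeping: one must verify that the hypotheses are strong enough to (i) legitimise the Fubini interchange — for which it is convenient to assume, as seems implicit in the statement, that $x'$ is bounded (or at least integrable) — and (ii) make the ordinary integration by parts valid, i.e. that $g(u)={_uJ_b^{1-\alpha}}y(u)$ is absolutely continuous, which is precisely the condition under which the right Riemann--Liouville derivative ${_tD_b^{\alpha}}y$ appearing on the right-hand side of \rf{a15} exists at all. Once these points are granted, the computation sketched above goes through verbatim.
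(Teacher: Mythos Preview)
Your argument is correct and is precisely the standard textbook derivation: rewrite the Caputo derivative as ${_aJ_t^{1-\alpha}}x'$, use the Fubini reciprocity $\int_a^b f\,{_aJ_t^{\beta}}g=\int_a^b g\,{_tJ_b^{\beta}}f$, and then perform one ordinary integration by parts, the boundary term being killed by $x(a)=x(b)=0$. Your caveats about regularity (absolute integrability for Fubini, absolute continuity of ${_tJ_b^{1-\alpha}}y$ for the classical integration by parts and for the very existence of ${_tD_b^{\alpha}}y$) are also the right ones to flag.

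There is nothing to compare against, however: the paper does not supply a proof of this theorem. It is quoted as a known result with the reference ``see, e.g., \cite{SKM}'' and used as a tool in the proof of Theorem~2. What you have written is essentially the proof one finds in that reference, so your proposal is fine as a self-contained justification of the statement.
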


It is important to notice that the formulas of integration by parts \eqref{a15} and \eqref{a16}
relate Caputo left (right) derivatives to Riemann-Liouville right (left) derivatives.

Finally, in order to obtain the equation of motion for our examples, we are going to use the following two relations 
\beq
\lb{a17}
i {_a^C D_t^{\frac{1}{2}}} x(t) = {_t^C D_b^{\frac{1}{2}}}x(t) \quad \quad \mbox{in the limit} \quad \quad a\rightarrow b
\eeq
and
\beq
\lb{a18}
{_t D_b^{\frac{1}{2}}} {_t^C D_b^{\frac{1}{2}}} x(t)=\frac{d}{dt}{_t J^{\frac{1}{2}}_b}{_t J^{\frac{1}{2}}_b}\frac{d}{dt}x(t)=\frac{d}{dt}{_t J^{1}_b}\frac{d}{dt}x(t)=\frac{d}{dt}x(t)
\eeq
The proof of \rf{a17} can be found in \cite{TRG}, and \rf{a18} follows from the general semi-group property ${_t J^{\alpha}_b}{_t J^{\beta}_b}={_t J^{\alpha+\beta}_b}$ (see, e.g., \cite{SKM,Diethelm}).


\section{A Generalized Fractional Lagrangian}

In classical calculus of variations it is of no conceptual and practical importance to deal with Lagrangian functions depending on derivatives of nonlinear functions of the unknown function $x$. This is due to the fact that in these cases we can always rewrite the Lagrangian $L$ as an usual Lagrangian $\tilde{L}$ by applying the chain's rules. As for example, for a differentiable function $f$ we can rewrote $L(t,x,\frac{d}{dt}f(x))=L(t,x,\frac{d}{du}f(u)|_{u=x}\dot{x})=\tilde{L}(t,x,\dot{x})$, where $\frac{dx}{dt}=\dot{x}$. However, this simplification for the fractional calculus of variation is not possible due to the absence of a simple chain's rule for fractional derivatives. It is just this apparent limitation of fractional derivatives what opens the very interesting possibility to investigate new kinds of Lagrangian suitable to study nonlinear systems. In the present work we investigate for the first time these kinds of Lagrangian and we apply them to construct Lagrangians for some Jerk nonlinear dynamical system. 

Our main result is the following Theorem:
\begin{theorem}
Let $f,g:\mathbb{R}\rightarrow \mathbb{R}$ be differentiable functions, and $S$ an action of the form
\begin{equation}
\label{t1}
S=\int_a^b {L} \left(t,x,\dot{x},{_a^C D_t^{\alpha}} f(x),{_a^C D_t^{\alpha}} g(\dot{x})\right)dt,
\end{equation}
where ${_a^C D_t^{\alpha}}$ is a Caputo fractional derivative of order $0<\alpha<1$, and the function $x$ satisfy the fixed boundary conditions $x(a)=x_a$, $x(b)=x_b$, and $\dot{x}(a)=\dot{x}_a$ $\dot{x}(b)=\dot{x}_b$. Also let ${L}\in C^{2}[a,b]\times \mathbb{R}^{4}$. Then, the necessary condition for $S$ to possess an extremum at $x$ is that the function $x$ fulfills the following fractional Euler-Lagrange equation:
\begin{equation}
\label{t2}
\frac{\partial {L}}{\partial x}-\frac{d}{dt}\frac{\partial {L}}{\partial \dot{x}}+\frac{d f}{d x}{_t D^{\alpha}_b}\frac{\partial {L}}{\partial\left({_a^C D^{\alpha}_t} f\right)}-\frac{d}{dt}\left(\frac{d g}{d \dot{x}}{_t D^{\alpha}_b}\frac{\partial {L}}{\partial\left({_a^C D^{\alpha}_t} g\right)}\right)=0.
\end{equation}
\end{theorem}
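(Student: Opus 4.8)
The plan is to carry out the classical first-variation argument, the only genuinely fractional ingredient being how ${_a^C D_t^{\alpha}}$ interacts with a variation of $x$. Since $x(a),x(b),\dot x(a),\dot x(b)$ are all prescribed, an admissible variation is a function $\eta$ with $\eta(a)=\eta(b)=0$ and $\dot\eta(a)=\dot\eta(b)=0$; put $x_\epsilon:=x+\epsilon\eta$ and $\Phi(\epsilon):=S(x_\epsilon)$, so that the necessary condition for an extremum is $\Phi'(0)=0$. The key observation is that one must \emph{not} try to differentiate $f(x_\epsilon)$ through the fractional derivative by a (nonexistent) chain rule; instead one uses that for $0<\alpha<1$ the Caputo derivative factors as ${_a^C D_t^{\alpha}}z={_aJ^{1-\alpha}_t}\dot z$ and is linear in $z$, and then applies the ordinary chain rule only to the integer-order $\frac{d}{dt}$ sitting inside. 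This gives
\begin{align*}
\left.\frac{\partial}{\partial\epsilon}\right|_{\epsilon=0}{_a^C D_t^{\alpha}}f(x_\epsilon)
&={_aJ^{1-\alpha}_t}\frac{d}{dt}\!\left[\frac{df}{dx}\,\eta\right]={_a^C D_t^{\alpha}}\!\left[\frac{df}{dx}\,\eta\right],\\
\left.\frac{\partial}{\partial\epsilon}\right|_{\epsilon=0}{_a^C D_t^{\alpha}}g(\dot x_\epsilon)
&={_aJ^{1-\alpha}_t}\frac{d}{dt}\!\left[\frac{dg}{d\dot x}\,\dot\eta\right]={_a^C D_t^{\alpha}}\!\left[\frac{dg}{d\dot x}\,\dot\eta\right].
\end{align*}

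Differentiating $\Phi$ under the integral sign and setting $\epsilon=0$ then yields
\begin{multline*}
\Phi'(0)=\int_a^b\biggl[\frac{\partial L}{\partial x}\,\eta+\frac{\partial L}{\partial\dot x}\,\dot\eta
+\frac{\partial L}{\partial\left({_a^C D_t^{\alpha}}f\right)}\,{_a^C D_t^{\alpha}}\!\left[\frac{df}{dx}\,\eta\right]\\
+\frac{\partial L}{\partial\left({_a^C D_t^{\alpha}}g\right)}\,{_a^C D_t^{\alpha}}\!\left[\frac{dg}{d\dot x}\,\dot\eta\right]\biggr]dt .
\end{multline*}
Now every derivative is moved off the variation. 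The $\frac{\partial L}{\partial\dot x}\dot\eta$ term is treated by ordinary integration by parts, its boundary term vanishing because $\eta(a)=\eta(b)=0$. For the third term I apply the fractional integration by parts \rf{a15} with $y=\partial L/\partial\left({_a^C D_t^{\alpha}}f\right)$ and with the role of the differentiated function played by $\frac{df}{dx}\,\eta$, which vanishes at $a$ and $b$; this turns ${_a^C D_t^{\alpha}}$ acting on the variation into ${_t D_b^{\alpha}}$ acting on $\partial L/\partial\left({_a^C D_t^{\alpha}}f\right)$, giving the term $\frac{df}{dx}\,{_t D_b^{\alpha}}\frac{\partial L}{\partial\left({_a^C D_t^{\alpha}}f\right)}$ multiplying $\eta$. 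For the fourth term I again use \rf{a15}, now with differentiated function $\frac{dg}{d\dot x}\,\dot\eta$ — which vanishes at $a,b$ precisely by the conditions $\dot\eta(a)=\dot\eta(b)=0$ — and then do one more ordinary integration by parts (boundary term again killed by $\eta(a)=\eta(b)=0$) to pull $\frac{d}{dt}$ onto $\frac{dg}{d\dot x}\,{_t D_b^{\alpha}}\frac{\partial L}{\partial\left({_a^C D_t^{\alpha}}g\right)}$. Collecting the four contributions gives $\Phi'(0)=\int_a^b\eta(t)\,E(t)\,dt$ with $E(t)$ the left-hand side of \rf{t2}; since this vanishes for all admissible $\eta$ — in particular for all $\eta\in C_c^{\infty}(a,b)$, which satisfy every endpoint constraint automatically — the fundamental lemma of the calculus of variations forces $E\equiv0$, which is \rf{t2}.

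The main obstacle is conceptual rather than computational, and it lies in justifying the two delicate steps. First, the variation identities: here the lack of a fractional Leibniz or chain rule is harmless, because the variation differentiates only with respect to the scalar $\epsilon$, so the ordinary chain rule applied to $\frac{d}{dt}$ inside ${_a^C D_t^{\alpha}}={_aJ^{1-\alpha}_t}\frac{d}{dt}$ is all that is needed. Second, the invocation of Theorem \ref{thm:ml:03}: its hypotheses demand that $\frac{df}{dx}\,\eta$ and $\frac{dg}{d\dot x}\,\dot\eta$ be differentiable and vanish at both endpoints. The vanishing is exactly what the four prescribed boundary values supply; the differentiability — together with the existence of $\frac{d}{dt}\frac{df}{dx}\big(x(t)\big)$ and $\frac{d}{dt}\frac{dg}{d\dot x}\big(\dot x(t)\big)$ used in the first step, and the differentiability in $t$ of $\frac{dg}{d\dot x}\,{_t D_b^{\alpha}}\partial L/\partial\left({_a^C D_t^{\alpha}}g\right)$ needed in the last integration by parts — quietly requires $f,g$ (and $x$) to be a bit smoother than merely differentiable, e.g.\ of class $C^2$; I would state this explicitly among the hypotheses.
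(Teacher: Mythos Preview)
Your proof is correct and follows essentially the same route as the paper: weak variations $x+\varepsilon\eta$ with $\eta,\dot\eta$ vanishing at the endpoints, differentiation under the integral, fractional integration by parts via \rf{a15} on the terms containing ${_a^C D_t^{\alpha}}\!\left[\frac{df}{dx}\eta\right]$ and ${_a^C D_t^{\alpha}}\!\left[\frac{dg}{d\dot x}\dot\eta\right]$, an ordinary integration by parts on the $\dot\eta$ terms, and the fundamental lemma. Your write-up is in fact more careful than the paper's in two respects: you explicitly justify the variation identity for ${_a^C D_t^{\alpha}}f(x_\epsilon)$ via the factorization ${_a^C D_t^{\alpha}}={_aJ^{1-\alpha}_t}\frac{d}{dt}$ (the paper simply asserts the resulting expression), and you correctly flag that the stated differentiability of $f,g$ is not quite enough to make every step rigorous.
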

\begin{proof}
In order to develop the necessary conditions for the extremum of the action \rf{t1}, we define a family of functions $x$ (weak variations)
\beq
\lb{p1}
x=x^*+\varepsilon \eta,
\eeq
where $x^*$ is the desired real function that satisfy the extremum of \rf{t1}, $\varepsilon \in \mathbb{R}$ is a constant, and the function $\eta$ defined in $[a,b]$ satisfy the boundary conditions
\beq
\lb{p2}
\eta(a)=\eta(b)=0, \;\;\;\;\;\; \dot{\eta}(a)=\dot{\eta}(b)=0.
\eeq
The condition for the extremum is obtained when the first G\^ateaux variation is zero
\beq
\lb{p3}
\begin{split}
\delta S&=\lim_{\varepsilon \rightarrow 0} \frac{S[x^*+\varepsilon \eta]-S[x^*]}{\varepsilon}=\int_a^b \left[ \eta \frac{\partial L}{\partial x^*} +\dot{\eta}\frac{\partial {L}}{\partial\left(\dot{x}^*\right)} \right.\\
&\left. \;\;\;\;\;\;\;\; +\left({_a^C D_t^{\alpha}}\eta\frac{df}{dx^*}\right)\frac{\partial L}{\partial \left( {_a^C D_t^{\alpha}} f\right)}+\left({_a^C D_t^{\alpha}}\dot{\eta}\frac{dg}{d\dot{x}^*}\right)\frac{\partial L}{\partial \left( {_a^C D_t^{\alpha}} g\right)}\right]dt=0. 
\end{split}
\eeq
Since the function $\eta$ satisfies both $\eta(a)=\eta(b)=0$ and $\dot{\eta}(a)=\dot{\eta}(b)=0$ boundary conditions \rf{p2}, we can use the fractional integration by parts \rf{a15} and \rf{a16} in \rf{p3}, obtaining:
\beq
\lb{p4}
\begin{split}
\delta S&=\int_a^b \left[ \eta \frac{\partial L}{\partial x^*} +\dot{\eta}\frac{\partial {L}}{\partial\left(\dot{x}^*\right)} +\eta\frac{df}{dx^*}{_t D_b^{\alpha}}\frac{\partial L}{\partial \left( {_a^C D_t^{\alpha}} f\right)}+\dot{\eta}\frac{dg}{d\dot{x}^*}{_t D_b^{\alpha}}\frac{\partial L}{\partial \left( {_a^C D_t^{\alpha}} g\right)}\right]dt\\
&=\int_a^b \eta \left[\frac{\partial L}{\partial x^*} -\frac{d}{dt}\frac{\partial {L}}{\partial\left(\dot{x}^*\right)} +\frac{df}{dx^*}{_t D_b^{\alpha}}\frac{\partial L}{\partial \left( {_a^C D_t^{\alpha}} f\right)}-\frac{d}{dt}\left(\frac{dg}{d\dot{x}^*}{_t D_b^{\alpha}}\frac{\partial L}{\partial \left( {_a^C D_t^{\alpha}} g\right)}\right)\right]dt=0,
\end{split}
\eeq
where an additional usual integration by parts was performed in the terms containing $\dot{\eta}$. Finally, by using the fundamental lemma of the calculus of variations, we obtain the fraction Euler-Lagrange equations \rf{t2}. 
\end{proof}

It is important to notice that our Theorem can be easily extended for Lagrangians depending on left Caputo derivatives, and Riemann-Liouville fractional derivatives. Actually, it is also easy to generalize in order to include a nonlinear function $g\left({_a^C D_t^{\alpha}} x\right)$ instead of $g(\dot{x})$. Finally, it is important to mention that our Theorem generalizes \cite{Riewe1,Riewe2} and the more general formulation proposed in \cite{Agrawal}, as well as the Lagrangian formulation for higher order linear open systems \cite{LazoCesar} 
(for a review in recent advances in calculus of variations with fractional derivatives see \cite{MalinowskaTorres}). 

\begin{remark} For $f(x)=g(\dot{x})=0$ our condition \rf{t2} reduces to the ordinary Euler-Lagrange equation, and the boundary conditions $x(a)=x_a$, $x(b)=x_b$ and $\dot{x}(a)=\dot{x}_a$, $\dot{x}(b)=\dot{x}_b$ are defined by only two arbitrary parameter. Note that for this particular case the Euler-Lagrange equation is a second order ordinary differential equation whose solution $x(t)$ is fixed by two parameters. For example, by imposing the conditions $x(a)=x_a$ and $x(b)=x_b$ the solution $x(t)$ is fixed and, consequently, the numbers $\dot{x}_a$ and $\dot{x}_b$ are automatically fixed as functions of $x_a$ and $x_b$.
\end{remark}


\section{Lagrangian For Nonlinear Chaotic Jerk Systems}

As an example for application of our generalized Euler-Lagrange equation \rf{t2}, in this section we obtained Lagrangians for some Jerk systems. The first example is the simplest one-dimensional family of jerk systems that displays chaotic solutions \cite{jerk,jerk1,jerk2a,jerk2b,jerk2c,jerk2d,jerk2e,jerk2f,jerk2g}
\beq
\lb{c5}
\dddot{x}+A\ddot{x}+\dot{x}=G(x), 
\eeq
where $A$ is a system parameter, and $G(x)$  is a nonlinear function containing one nonlinearity, one system parameter and a constant term. A Lagrangian for this jerk system is given by
\begin{equation}
\lb{c6}
L\left(x,{_t^C D^{\frac{1}{2}}_b}x,{_t^C D^{\frac{1}{2}}_b} \dot{x}\right)=-\frac{i}{2}\left({_a^C D^{\frac{1}{2}}_t} \dot{x}\right)^2-\frac{A}{2}\left(\dot{x}\right)^2+\frac{i}{2}\left({_a^C D^{\frac{1}{2}}_t} x\right)^2-\int G(x)dx.
\end{equation}
In order to show that \rf{c6} give us \rf{c5}, we insert \rf{c6} into our generalized Euler-Lagrange equation \rf{t2}, obtaining
\beq
\lb{c7}
i\frac{d}{dt}\left({_t^C D^{\frac{1}{2}}_b} {_a^C D^{\frac{1}{2}}_t}\right) \dot{x}+A\ddot{x}+i\left({_t^C D^{\frac{1}{2}}_b} {_a^C D^{\frac{1}{2}}_t}\right) x=G(x),
\eeq
and we follow the procedure introduced in \cite{Riewe1,Riewe2} by taking the limit $a\rightarrow b$. Taking the limit in \rf{c7} and using \rf{a17} and \rf{a18} we get \rf{c5}.

The Lagrangian \rf{c6} is equivalent to the introduced by us in \cite{LazoCesar}. However, it is important to stress that \rf{c5} is the only chaotic Jerk system containing nonlinearity depending only on $x$ \cite{jerk,jerk1,jerk2a,jerk2b,jerk2c,jerk2d,jerk2e,jerk2f,jerk2g}. For Jerk systems with more complex nonlinearities, as for example $x\dot{x}$, $\dot{x}^2$ and $x\ddot{x}$, it is not possible to formulate a simple Lagrangian, depending only on $x$ and its derivatives, by using classical calculus of variation or previous formulations including fractional derivatives. Using our Euler-Lagrange equation \rf{t2} we can formulate, by the first time, a Lagrangian for these jerk systems \cite{jerk,jerk1,jerk2a,jerk2b,jerk2c,jerk2d,jerk2e,jerk2f,jerk2g}:
\begin{equation}
L=-\frac{i}{2}\left({_a^C D^{\frac{1}{2}}_t} \dot{x}\right)^2-\frac{A}{2}\left(\dot{x}\right)^2+\frac{i}{2}\left({_a^C D^{\frac{1}{2}}_t} x^2\right){_a^C D^{\frac{1}{2}}_t} \dot{x}+\frac{x^2}{2} \Longrightarrow \dddot{x}+A\ddot{x}-\dot{x}^2+x=0,
\end{equation}

\begin{equation}
\begin{split}
L=-\frac{i}{2}\left({_a^C D^{\frac{1}{2}}_t} \dot{x}\right)^2-\frac{A}{2}\left(\dot{x}\right)^2-\frac{i}{2}\left({_a^C D^{\frac{1}{2}}_t} x^2\right){_a^C D^{\frac{1}{2}}_t} x+\frac{x^2}{2} \Longrightarrow \dddot{x}+A\ddot{x}-x\dot{x}+x=0
\end{split}
\end{equation}

\begin{equation}
\begin{split}
L=-\frac{i}{2}\left({_a^C D^{\frac{1}{2}}_t} \dot{x}\right)^2-\frac{A}{2}x\left(\dot{x}\right)^2+i\frac{A+2}{4}\left({_a^C D^{\frac{1}{2}}_t} x^2\right){_a^C D^{\frac{1}{2}}_t} \dot{x}+\frac{x^2}{2} \Longrightarrow \dddot{x}+Ax\ddot{x}-\dot{x}^2+x=0
\end{split}
\end{equation}


\section{Conclusions}

In the present work we obtained an Euler-Lagrange equation for Lagrangians depending on fractional derivatives of nonlinear functions of the unknown function $x$. Our formulation enables us to obtain Lagrangians for nonlinear open and dissipative systems, and consequently, it enables us to use the most advanced methods of classical mechanic to study these systems. As examples of applications, we obtained a Lagrangian for some chaotic Jerk dynamical system.


\section*{Acknowledgments}

The author is grateful to the Brazilian foundations FAPERGS, CNPq and Capes for financial support.


\end{document}